\newtheorem{lemma}{Lemma}
\theoremstyle{definition}
\begin{document}
\begin{titlepage}
\begin{center}
\vspace*{-2\baselineskip}
\begin{minipage}[l]{7cm}
\flushleft
\includegraphics[width=2 in]{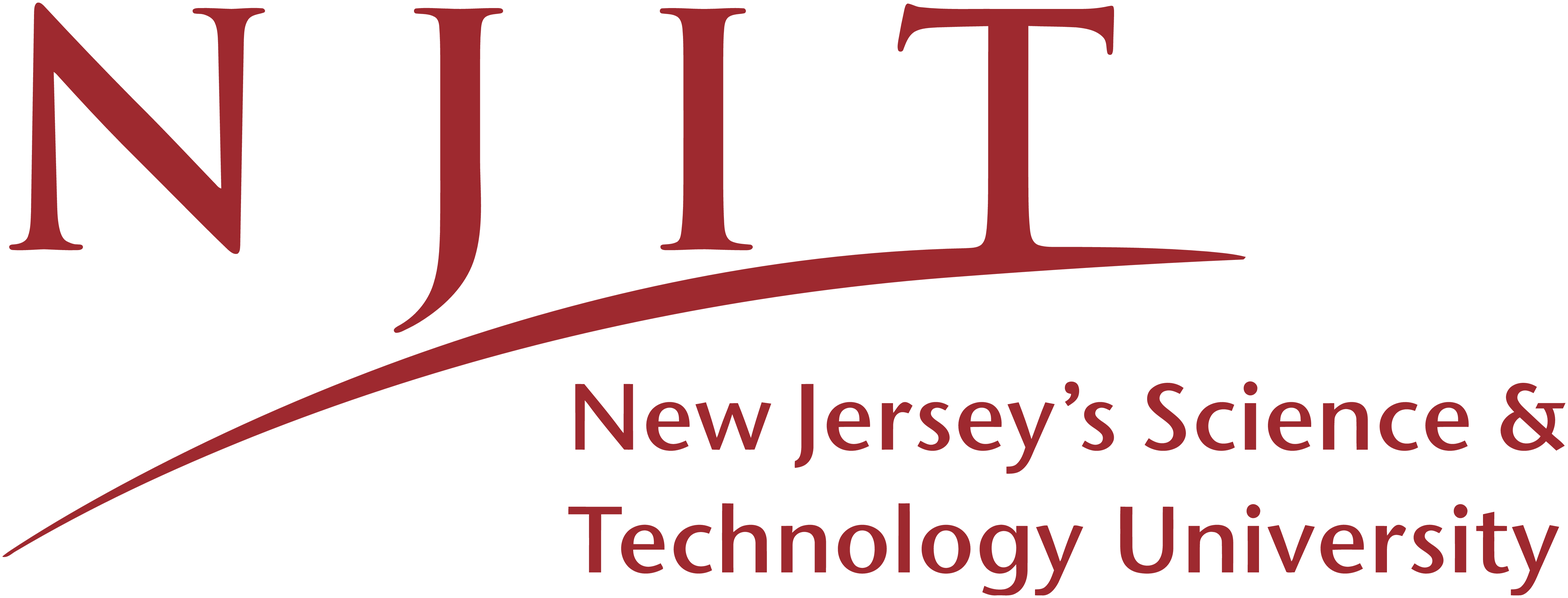}
\end{minipage}
\hfill
\begin{minipage}[r]{7cm}
\flushright
\includegraphics[width=1 in]{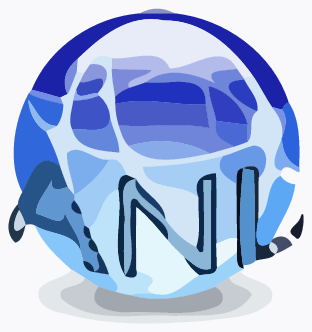}
\end{minipage}
\vfill
\textsc{\LARGE Renewable Energy-Aware Inter-datacenter Virtual Machine \\[12pt]
Migration over Elastic Optical Networks}\\
\vfill
\textsc{\LARGE LIANG ZHANG\\[12pt]
\LARGE TAO HAN \\[12pt]
\LARGE NIRWAN ANSARI}\\
\vfill
\textsc{\LARGE TR-ANL-2015-005\\[12pt]
\LARGE AUGUST 26, 2015}\\[1.5cm]
\vfill
{ADVANCED NETWORKING LABORATORY\\
 DEPARTMENT OF ELECTRICAL AND COMPUTER ENGINEERING\\
 NEW JERSY INSTITUTE OF TECHNOLOGY}
\end{center}
\end{titlepage}
\title{Renewable Energy-Aware Inter-datacenter Virtual Machine Migration over Elastic Optical Networks}
\author{\IEEEauthorblockN{Liang Zhang, Tao Han, ~\IEEEmembership{Member,~IEEE}, and Nirwan Ansari, ~\IEEEmembership{Fellow,~IEEE}}
\thanks{Liang Zhang and Nirwan Ansari are with the Advanced Networking Laboratory, Department of Electrical and Computer Engineering, New Jersey Institute of Technology, Newark, NJ, 07102, USA. (Email: \{lz284, nirwan.ansari\}@njit.edu).}
\thanks{Tao Han was with the Advanced Networking Laboratory, Department of Electrical and Computer Engineering, New Jersey Institute of Technology. He is now with the Department of Electrical and Computer Engineering, University of North Carolina at Charlotte, Charlotte, NC, 28223, USA. (Email: than3@uncc.edu).}}
\maketitle
\begin{abstract}
Datacenters (\emph{DC}s) are deployed in a large scale to support the ever increasing demand for data processing to support various applications. The energy consumption of DCs becomes a critical issue. Powering DCs with renewable energy can effectively reduce the brown energy consumption and thus alleviates the energy consumption problem. Owing to geographical deployments of DCs, the renewable energy generation and the data processing demands usually vary in different DCs. Migrating virtual machines (\emph{VM}s) among DCs according to the availability of renewable energy helps match the energy demands and the renewable energy generation in DCs, and thus maximizes the utilization of renewable energy. Since migrating VMs incurs additional traffic in the network, the VM migration is constrained by the network capacity. The inter-datacenter (\emph{inter-DC}) VM migration with network capacity constraints is an NP-hard problem. In this paper, we propose two heuristic algorithms that approximate the optimal VM migration solution. Through extensive simulations, we show that the proposed algorithms, by migrating VM among DCs, can reduce up to $31\%$ of brown energy consumption.
\end{abstract}

\begin{IEEEkeywords}
Manycast, Cloud Computing, Elastic Optical Networks.
\end{IEEEkeywords}

\section{Introduction}
Cloud infrastructures are widely deployed to support various emerging applications such as: Google App Engine, Microsoft Window Live Service, IBM Blue Cloud, and Apple Mobile Me \cite{Sadiku2014}. Large-scale data centers (\emph{DC}s), which are the fundamental engines for data processing, are the essential elements in cloud computing \cite{Zhangyan_13DC, zhangyan_dc_survey}. Information and Communication Technology (\emph{ICT}) is estimated to be responsible for about $14\%$ of the worldwide energy consumption by 2020 \cite{Pickavet2008}. The energy consumption of DCs accounts for nearly 25\% of the total ICT energy consumption \cite{Pickavet2008}. Hence, the energy consumption of DCs becomes an imperative problem.

Renewable energy, which includes solar energy and wind power, produces $12.7\%$ domestic electricity of the United States in 2011 \cite{Green-cloud2013}. Renewable energy will be widely adopted to reduce the brown energy consumption of ICT \cite{tao2014_magazine}. For example, Parasol is a solar-powered DC \cite{Goiri_GreenDc}. In Parasol, GreenSwitch, a management system, is designed to manage the work loads and the power supplies \cite{Goiri_GreenDc}. The availability of renewable energy varies in different areas and changes over time. The work loads of DCs also vary in different areas and at different time. As a result, the renewable energy availability and energy demands in DCs usually mismatch with each other. This mismatch leads to inefficient renewable energy usage in DCs. To solve this problem, it is desirable to balance the work loads among DCs according to their green energy availability. Although the current cloud computing solutions such as cloud bursting \cite{Wood_cloudNet_2014}, VMware and F5 \cite{VMware_F5} support inter-datacenter (\emph{inter-DC}) virtual machine (\emph{VM}) migration, it is not clear how to migrate VM among renewable energy powered DCs to minimize their brown energy consumption.

Elastic Optical Networks (\emph{EONs}), by employing orthogonal frequency division multiplexing (\emph{OFDM}) techniques, not only provide a high network capacity but also enhance the spectrum efficiency because of the low spectrum granularity \cite{Shieh2007}. The granularity in EONs can be 12.5 GHz or even much smaller \cite{Armstrong2009}. Therefore, EONs are one of the promising networking technologies for inter-DC networks \cite{Develder2012}.

Powering DCs with renewable energy can effectively reduce the brown energy consumption, and thus alleviate green house gas emissions. DCs are usually co-located with the renewable energy generation facilities such as solar and wind farms \cite{Figuerola_2009}. Since transmitting renewable energy via the power grid may introduce a significant power loss, it is desirable to maximize the utilization of renewable energy in the DC rather than transmitting the energy back to the power grid.  In this paper, we investigate the \emph{r}enewable \emph{e}nergy-\emph{a}ware \emph{i}nter-DC VM \emph{m}igration (\emph{RE-AIM}) problem that optimizes the renewable energy utilization by migrating VMs among DCs. Fig. \ref{fig:six-cloud} shows the architecture of an inter-DC network. The vertices in the graph stand for the optical switches in EONs. DCs are connected to the optical switches via IP routers \footnote{In this paper, we focus on the EONs. The design and optimization of the IP networks are beyond the scope of this paper.}. These DCs are powered by hybrid energy including brown energy, solar energy, and wind energy. In migrating VMs among DCs, the background traffic from other applications are also considered in the network. For example, assume that DC $1$ lacks renewable energy while DC $2$ and DC $3$ have superfluous renewable energy. Some VMs can be migrated out of DC $1$ in order to save brown energy. Because of the background traffic and the limited network resource, migrating VMs using different paths (Path $1$ or Path $2$) has different impacts on the network in terms of the probability of congesting the network. It is desirable to select a migration path with minimal impact on the network.

\begin{figure}[!htb]
    \centering
    \includegraphics[width=0.95\columnwidth]{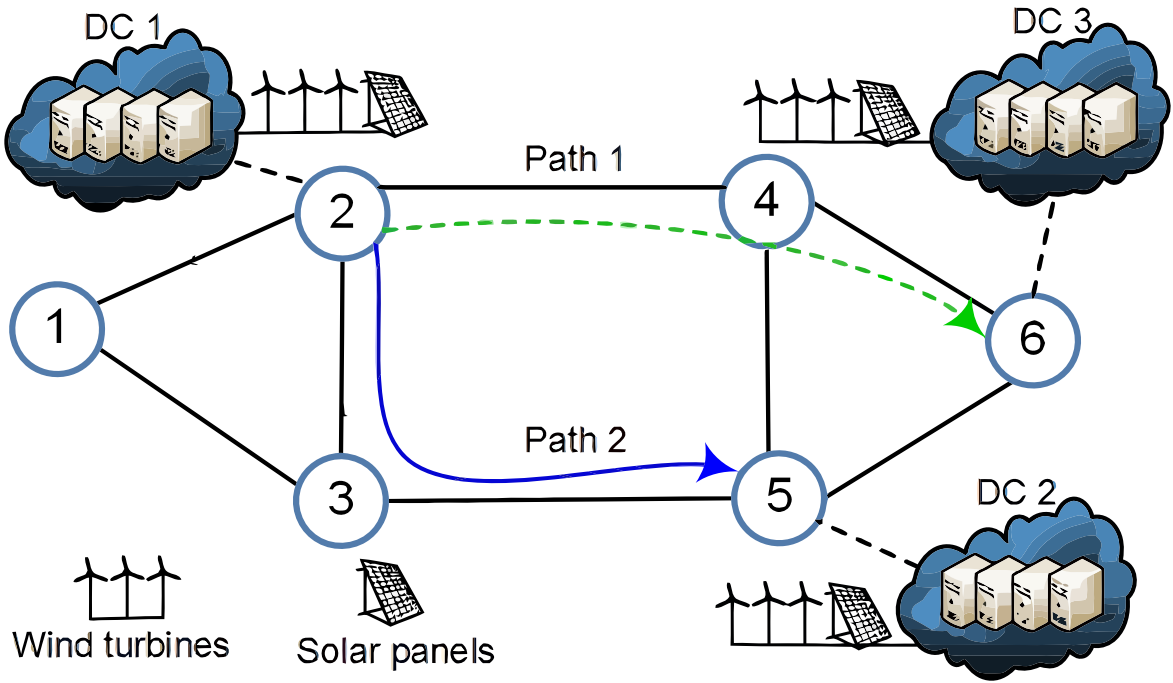}
    \caption{\small Inter-DC architecture.}
    \label{fig:six-cloud}
\end{figure}

The rest of this paper is organized as follows. Section \ref{sec:related_work} describes the related work. Section \ref{sec:problem} formulates the RE-AIM problem. Section \ref{sec:analysis-algorithms} briefly analyzes the property of the RE-AIM problem and proposes two heuristic algorithms to solve the problem. Section \ref{sec:evaluations} demonstrates the viability of the proposed algorithms via extensive simulation results. Section \ref{conclusion} concludes the paper.

\section{Related Work}\label{sec:related_work}
Owing to the energy demands of DCs, many techniques and algorithms have been proposed to minimize the energy consumption of DCs \cite{Ghamkhari_2013}.

Fang \emph {et al.} \cite{Yonggang_Wen_Globemcom_2012} presented a novel power management strategy for the DCs, and their target was to minimize the energy consumption of switches in a DC. Cavdar and Alagoz \cite{Survey_Green_DC_2012} surveyed the energy consumption of server and network devices of intra-DC networks, and showed that both computing resources and network elements should be designed with energy proportionality. In other words, it is better if the computing and networking devices can be designed with multiple sleeping states. A few green metrics are also provided by this survey, such as Power Usage Effectiveness (PUE) and Carbon Usage Effectiveness (CUE).

Deng \emph {et al.} \cite{Fangming_Liu_IEEE_Network2014} presented five aspects of applying renewable energy in the DCs: the renewable energy generation model, the renewable energy prediction model, the planning of green DCs (i.e., various renewable options, avalabity of energy sources, different energy storage devices), the intra-DC work loads scheduling, and the inter-DC load balancing. They also discussed the research challenges of powering DCs with renewable energy. Ghamkhari and Mohsenian-Rad \cite{Ghamkhari_2013} developed a mathematical model to capture the trade-off between the energy consumption of a data center and its revenue of offering Internet services. They proposed an algorithm to maximize the revenue of a DC by adapting the number of active servers according to the traffic profile. Gattulli \emph {et al.} \cite{IP_over_WDM_icc_2012} proposed algorithms to reduce $CO_{2}$ emissions in DCs by balancing the loads according to the renewable energy generation. These algorithms optimize renewable energy utilization while maintaining a relatively low blocking probability.



Mandal \emph {et al.} \cite{Green-cloud2013} studied green energy aware VM migration techniques to reduce the energy consumption of DCs. They proposed an algorithm to enhance the green energy utilization by migrating VMs according to the available green energy in DCs. However, they did not consider the network constraints while migrating VMs among DCs. In the optical networks, the available spectrum is limited. The large amount of traffic generated by the VM migration may congest the optical networks and increase the blocking rate of the network. Therefore, it is important to consider the network constraints in migrating VMs. In this paper, we propose algorithms to solve the green energy aware inter-DC VM migration problem with network constraints.

\section{Problem Formulation}
\label{sec:problem}
In this section, we present the network model, the energy model, and the formulation of the RE-AIM problem. The key notations are summarized in Table \ref{tab:notations}.

\begin{table}[!htb]
\begin{center}
\caption{The Important Notations}\label{tab:notations}
\begin{tabular}{{|l|p{185pt}|}}
\hline
Symbol           & Definiton                                    \\
\hline
$c_{e}$          & The capacity of a link $e \in E$ in terms of spectrum slots.\\
$c_{s}$          & The capacity of a spectrum slot.    \\
$c_{m}$          & The maximum number of servers in the $m$th DC. \\
$\varsigma$      & The maximum number of VMs can be supported in a server. \\
$\varPhi_{m}$    & The amount of renewable energy in the $m$th DC. \\
$\varTheta_{m}$  & The number of VMs in the $m$th DC.\\
$\alpha_{m}$     & Per unit energy cost for the $m$th DC.        \\
$\zeta_{m,k}$    & The required bandwidth for migrating the $k$th VM in the $m$th DC. \\
$\mathcal {R}$   & The set of the migration requests.\\
$\mathcal {Q}_{r}$& The set of VMs migrated in the $r$th migration.\\
$\kappa$         & The migration granularity.\\
$w_{p}^{r,m}$    & The used spectrum slot ratio of the $p$th path in the $r$th migration from the $m$th DC.\\
$w_{B}$          & The maximum network congestion ratio.\\
$p_{s}$          & The maximum energy consumption of a server.      \\
$\eta$           & The power usage efficiency.                                  \\
\hline
\end{tabular}
\end{center}
\end{table}

\subsection{Network Model}\label{sec:network-model}
We model the inter-DC network by a graph, $\mathcal{G}(V, E, B)$. Here, $V$, $E$ and $B$ are the node set, the link set and the spectrum slot set, respectively. The set of DC nodes is denoted as $\mathcal{D}$. We assume that all DCs are powered by hybrid energy. We denote $\mathcal {D}_{s}$ as the set of DCs that does not have sufficient renewable energy to support their work loads and $\mathcal {D}_{d}$ as the set of DCs that has surplus renewable energy. During the migration, $\mathcal {D}_{s}$ and $\mathcal {D}_{d}$ correspond to the two sets of DCs acting as the sources and destinations, respectively. We define $\kappa$ as the migration granularity, which determines the maximum routing resource that can be used in one migration to each DC.

\subsection{Energy Model}\label{sec:energy-model}
We assume that there are $c_{m}$ servers in the $m$th DC and each server can support up to $\varsigma$ VMs.
The energy consumption of a server is $p_{s}$ when it is active. A server is active as long as it hosts
at least one active VM; otherwise, the server is in the idle state. Here, we assume that an idling server will be turned off and its energy consumption is zero. Then, $\left\lceil \varTheta_{m}/\varsigma \right\rceil$ is the number of active servers required in the $m$th DC \cite{Green-cloud2013}. We denote $\eta$ as the power usage effectiveness, which is defined as the ratio of a DC's total energy consumption (which includes the facility energy consumption for cooling, lighting, etc. \cite{abbas2015_green_dc}) to that of the servers in the DC. Given $\eta$, a DC's total energy consumption is $\eta \cdot p_{s} \cdot \varTheta_{m}/\varsigma$. We denote $\varUpsilon_{m}$ as the brown energy consumption in the $m$th DC. Then,
\begin{equation}\label{eq:v1}
\varUpsilon_{m}=\max(0, \eta \cdot p_{s} \cdot \left\lceil \varTheta_{m}/\varsigma\right\rceil-\varPhi_{m})
\end{equation}

\subsection{Problem Formulation}\label{sec:formulation}
In the problem formulation, $\chi_{p,f}^{m,k}$ is a binary variable. $\chi_{p,f}^{m,k}=1$ indicates that the $k$th VM in the $m$th DC is migrated using the $p$ path with the $f$th spectrum slot as the starting spectrum slot. The objective of the RE-AIM problem is to minimize the total brown energy cost in all DCs with the VM service constraints and the network resource constraints. The problem is formulated as:
\begin{align}
\label{eq:objective}
\min_{\chi_{p,f}^{m,k}}\quad & \sum_{m} {\alpha_{m} \cdot \varUpsilon_{m}}\\
s.t.:\nonumber & \\
& VM\;service\;constraints:\nonumber\\
&\sum_{m}\sum_{k}\sum_{p}\sum_{f}\chi_{p,f}^{m,k}=\sum_{m}\varTheta_{m}\label{eq:c1}\\
&\sum_{k}\sum_{p}\sum_{f}\chi_{p,f}^{m,k} \leq c_{m}, \forall m\in \mathcal {D}_{s}\label{eq:c2}\\
&\begin{aligned}\label{eq:c3}
  &\sum_{m'\in \mathcal {D}_{s}}\sum_{k}\sum_{p}\sum_{f}\chi_{p,f}^{m',k} + \\
  &\sum_{k}\sum_{p}\sum_{f}\chi_{p,f}^{m,k} \leq c_{m},\forall m\in \mathcal {D}_{d}
  \end{aligned}\\
& Network\;resource\;constraints:\nonumber\\
& w_{p}^{r,m}+ \frac{\varGamma_{p,f}^{r,m}}{c_{e}}\leq w_{B},\quad\forall m\in \mathcal {D}_{s}, r \in\mathcal{R} \label{eq:c4}\\
& f(\chi_{p,f}^{m,k})+b(\chi_{p,f}^{m,k}) \leq c_{e}\label{eq:c5} \\
& f(\chi_{p,f}^{m,k})+b(\chi_{p,f}^{m,k})-f(\chi_{p,f}^{m,k+1})\leq 0\label{eq:c6}\\
& \begin{aligned}\label{eq:c9}
& f(i)+b(i)-f(j)\leq [2-\delta_{i,j}-y(i, j)] \cdot\\
& F_{max}, \quad\forall i \neq j
  \end{aligned}\\
& \begin{aligned}\label{eq:c10}
& f(j)+b(j)-f(i)\leq [1+\delta_{i,j}-y(i, j)] \cdot\\
&F_{max}, \quad\forall i \neq j
  \end{aligned}
\end{align}
Here, Eqs. \eqref{eq:c1}-\eqref{eq:c3} are the VM service constraints. Eq. \eqref{eq:c1} constrains that all the VMs should be hosted in the DCs, while Eqs. \eqref{eq:c2}-\eqref{eq:c3} constrain that the total number of VMs in a DC should not exceed the DCs' capacity. The network resource constraints are shown in Eqs. \eqref{eq:c4}-\eqref{eq:c10}.

Eq.\eqref{eq:c4} constrains the network congestion ratio to be less than $w_{B}$, which is the maximum network congestion ratio allowed for routing in the network. In Eq. \eqref{eq:c4}, $w_{p}^{r,m}$ is the spectrum slot ratio of the $p$th path in the $r$th migration from the $m$th DC, which is defined as the ratio of the number of occupied spectrum slots in the $p$th path to the total number of spectrum slots of this path. $\varGamma_{p,f}^{r, m}$ is defined as the number of spectrum slots used in the $p$th path for the $r$th migration from the $m$th DC. 
Eq.\eqref{eq:c5} is a link capacity constraint of the network; it constrains the bandwidth used in migrating VMs not to exceed the capacity of the network resource. Here, $b(\cdot)$ is the bandwidth requirement in terms of spectrum slots, and $f(\cdot)$ is the index of the starting spectrum slot of a path. For example, $f(\chi_{p,f}^{m,k})$ represents the starting spectrum slot index of the path, which is used by $\chi_{p,f}^{m,k}$. Eq.\eqref{eq:c6} is the spectrum non-overlapping constraint of a path used by two different VMs in one migration. This constraint must be met for each VM in every migration; if two VMs use the same spectrum slot in one migration, the total bandwidth allocated to the two VMs should not exceed the capacity of a spectrum slot; otherwise, each VM must use a unique spectrum slot. In the migration, the VMs are sorted in ascending order based on their bandwidth requirement. We assume the VMs are migrated according to an ascending order; for example, the $(k+1)$th VM is moved after the $k$th VM is migrated.

Eqs. \eqref{eq:c9}-\eqref{eq:c10} are the spectrum non-overlapping and the continuity constraints \cite{Christodoulopoulos2011}. This spectrum non-overlapping constraint is used for different paths. In these constraints, $i$ and $j$ represent two different paths used in the migration. Here, $F_{max}$ is the upper bound of the total bandwidth requirement in terms of spectrum slots. $\delta_{i,j}$ $(\forall i \neq j)$ is a Boolean variable defined in Eq. \eqref{eq:c15}, which equals $1$ if the starting spectrum slot index of the $i$th path is smaller than that of the $j$th path; otherwise, it is $0$. We define $y(i,j)$ $(\forall i \neq j)$ as a Boolean indicator, which equals $1$ if the $i$th path and the $j$th path in the migration have at least one common link; otherwise, it is $0$. We give an example to illustrate these equations. If $y(i, j)=1$ and $\delta_{i,j}=1$, Eq. \eqref{eq:c9} becomes Eq. \eqref{eq:c16}, which ensures the bandwidth non-overlapping constraint. Eq. \eqref{eq:c10} is automatically satisfied in this case.

\begin{equation}\label{eq:c15}
\delta_{i,j}=
\begin{cases}
1, & f(i) < f(j)\\
0, & f(i) \geq f(j)
\end{cases}
\end{equation}
\begin{equation}\label{eq:c16}
  f(i)+b(i)\leq f(j)
\end{equation}

When we provision spectrum slots for requests in the EONs, the path continuity constraint, spectrum continuity constraint and non-overlapping constraint must be considered. For the path continuity constraint, a lightpath must use the same subcarriers in the whole path for a request. For the spectrum continuity constraint, the used subcarriers must be continuous if a request needs more than one subcarriers. For the non-overlapping constraint, two different lightpaths must be assigned with different subcarriers if they have one or more common links. Since we use a path based method to formulate the RE-AIM problem, the path continuity constraint of the network is already taken into account.

The main contribution of this paper is considering the network influence on the migration when we minimize the brown energy consumption of the DCs. In other words, we want to impose a controllable effect on the network in the migration that leads to less network congestion.

%

\section{Problem Analysis and Heuristic Algorithms }\label{sec:analysis-algorithms}
\subsection{Problem Analysis}\label{sec:analysis}
To solve the RE-AIM problem, both the energy costs in DCs and the network resource required for the migration should be considered. For example, when a DC consumes brown energy, it is desirable to migrate some VMs to other DCs. The VM migration will introduce additional traffic to the network. To avoid congesting the network, we have to optimize the number of VMs that will be migrated and select the routing path for the migration. Therefore, it is challenging to solve the RE-AIM, which is proven to be NP-hard.

\begin{lemma}
The RE-AIM problem is NP-hard
\end{lemma}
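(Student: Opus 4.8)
The plan is to prove NP-hardness by a polynomial-time reduction from a problem that is already embedded inside RE-AIM, namely the Routing and Spectrum Assignment (RSA) problem of Elastic Optical Networks, whose NP-hardness is established in \cite{Christodoulopoulos2011}. First I would pass to the natural decision version of RE-AIM: given the graph $\mathcal{G}(V,E,B)$, the energy parameters, and a target cost value $C$, does there exist an assignment of the binary variables $\chi_{p,f}^{m,k}$ satisfying constraints \eqref{eq:c1}--\eqref{eq:c10} whose objective \eqref{eq:objective} is at most $C$? Hardness of this decision version immediately yields hardness of the optimization problem.

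The core idea is to choose the energy side of the instance so that the optimal migration is forced to relocate a prescribed set of VMs, thereby stripping away the facility/energy combinatorics and exposing the pure spectrum-feasibility question. Concretely, given an RSA instance, i.e. a set of traffic demands with bandwidth requirements over a network, I would build an RE-AIM instance with a single source DC $s\in\mathcal{D}_s$ holding one VM per demand (with $\zeta_{s,k}$ equal to the demand's slot requirement, mapped through $b(\cdot)$), and a single destination DC $d\in\mathcal{D}_d$ endowed with ample server capacity $c_d$ and enough renewable energy $\varPhi_d$ to host all migrated VMs at zero brown-energy cost, while $s$ has $\varPhi_s=0$ and a large per-unit cost $\alpha_s$. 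With this choice the objective \eqref{eq:objective}, through the $\lceil \varTheta_m/\varsigma\rceil$ term in \eqref{eq:v1}, is minimized exactly when every VM leaves $s$, so constraint \eqref{eq:c1} forces all demands to be routed and spectrum-assigned. The congestion bound $w_B$ and the link capacity $c_e$ in \eqref{eq:c4}--\eqref{eq:c10} are set to the available spectrum width of the RSA instance, so that a feasible $\chi$ exists and meets the target cost $C$ (the all-migrated cost) if and only if the original demands admit a valid routing and contiguous, non-overlapping spectrum assignment. Constraints \eqref{eq:c5}--\eqref{eq:c6} and \eqref{eq:c9}--\eqref{eq:c16} are precisely the spectrum-continuity and non-overlapping conditions of RSA, so correctness of the equivalence follows once the encoding of demands as VMs is checked in both directions.

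The main obstacle is the faithful realization of the network gadget: I must guarantee that the path set available to the $\chi_{p,f}^{m,k}$ variables induces exactly the link-sharing (conflict) structure of the RSA instance, so that two VMs' paths share a link in $E$ precisely when the corresponding demands conflict, and that no spurious sharing is introduced by the destination-side routing toward $d$. This is the standard but delicate step in RSA-type hardness arguments, and it also requires tuning $\varsigma$, the capacities $c_m$, the granularity $\kappa$, and $w_B$ so that (i) the ceiling in \eqref{eq:v1} does not create integrality slack that lets a suboptimal migration meet the cost target, and (ii) the congestion constraint \eqref{eq:c4} collapses to the plain ``fits within the available slots'' condition. Once the gadget is fixed and these parameters are pinned down, the reduction is clearly polynomial in the size of the RSA instance, and the equivalence between RSA feasibility and achieving cost $C$ in RE-AIM completes the argument.
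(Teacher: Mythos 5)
Your route is genuinely different from the paper's: the paper's entire proof is a one-sentence assertion that any instance of the multi-processor scheduling problem (MPS) can be reduced to RE-AIM, i.e.\ it locates the hardness in the VM-to-DC assignment side (jobs to processors under capacity constraints) and gives no construction at all. You instead locate the hardness in the network side, reducing from the routing and spectrum assignment (RSA) problem of \cite{Christodoulopoulos2011}. That is a legitimate and arguably more informative choice, since it would show RE-AIM remains hard even when the energy/server combinatorics are trivialized, and your framing (pass to the decision version, force all VMs to migrate by giving the source zero renewable energy and the destination ample capacity, so that meeting the cost target is equivalent to spectrum feasibility) is the right skeleton for such a reduction.

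However, the proposal leaves its central step open, and the specific construction you describe cannot close it. With a single source DC $s$ and a single destination DC $d$, every migrated VM is a demand between the same node pair, so your reduction only ever produces single-commodity RSA instances; a general RSA instance has demands between arbitrary node pairs, and its conflict structure (which pairs of demands can be forced to share a link under every routing choice) is not reproducible by paths that all start at $s$ and all end at $d$ --- you name this as ``the main obstacle'' but do not resolve it, and resolving it is the proof. The natural repair is to attach a distinct source DC at each demand's origin and a distinct destination DC at each demand's terminus via private links, and to give each destination exactly the renewable energy and server capacity needed for its intended VM so that the manycast destination choice in \eqref{eq:c1}--\eqref{eq:c3} is forced; one must then still check that \eqref{eq:c4} and the per-migration sharing rule \eqref{eq:c6} collapse to the plain slot-capacity and non-overlapping conditions of RSA, which you assert but do not verify. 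As written, the proposal is a credible plan with the hard part missing --- which, to be fair, is also an accurate description of the paper's own proof, just aimed at a different source problem.
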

\begin{proof}
We prove that the RE-AIM problem is NP-hard by reducing any instance of the multi-processor scheduling problem (\emph{MPS}) into the RE-AIM problem.\end{proof}

In the RE-AIM problem, without considering the network constraints, the optimal number of VMs hosted in the DCs can be derived according to the availability of the renewable energy. However, with the consideration of the network constraints and the background traffic, it is difficult and impossible to solve the RE-AIM problem online. For the RE-AIM problem, many VMs are migrated from a set of DCs (source DCs) to another set of DCs (destination DCs). Therefore, we can model the VM migration problem as a manycast problem. Since the RE-AIM problem is NP-hard, we propose heuristic algorithms to solve this problem. These algorithms determine which VM should be migrated to which DC and select a proper routing path in the network to avoid congesting the network. We consider two network scenarios. The first one is a network with light traffic load. Under this network scenario, we design Manycast with Shortest Path Routing (\emph{Manycast-SPR}) algorithm for VM migrations. The second network scenario is a network with heavy traffic load. In this case, we propose Manycast Least-Weight Path Routing (\emph{Manycast-LPR}) for migrating VMs among DCs.

\subsection{Heuristic Algorithms For Light Work Loads}\label{sec:algorithm_Manycast-SPR}
When the network load is light, there are more available spectrum slots. It is easy to find a path with available spectrum slots for the migration requests. Then, a lower computing complexity algorithm is preferred. Manycast-SPR only uses the shortest path, and thus it is a very simple algorithm. Hence, Manycast-SPR is expected to provision the inter-DC VM migration requests in a network with light work loads.

The Manycast-SPR algorithm, as shown in Alg. \ref{Manycast-SPR}, is to find the shortest routing path that satisfies the VM migration requirement and the network resource constraints. In the beginning, we input $\mathcal{G}(V, E, B)$, $\varTheta_{m}$ and $\varPhi_{m}$, and then calculate the optimal work loads distribution. Afterward, we get $\mathcal {D}$, $\mathcal {D}_{s}$ and $\mathcal {D}_{d}$. Then, we collect the migration requests $R$. Here, our algorithm splits the manycast requests into many anycast requests $r \in R$. Now, we start to find a source DC $s$ and a destination DC $d$ for the request $r$. The migration will try to use the shortest path $p$ from $s$ to $d$; the request $r$ is carried out if the network congestion constraint is satisfied; otherwise, the request is denied. Then, we update $\mathcal {D}_{s}$ and $\mathcal {D}_{d}$ for the next request. After many rounds of migration, if $\mathcal {D}_{s}$ or $\mathcal {D}_{d}$ is empty, or Eq. (\ref{eq:c4}) is not satisfied, the migration is completed.

Details of the Manycast-SPR algorithm is described in \emph{Algorithm} \ref{Manycast-SPR}. Here, $p(\cdot)$ is a function which targets to get the path for the migration. The complexity of Manycast-SPR is $O(|B| |E|^{2} \mathcal{|R|} \mathcal|{Q}_{r}| \mathcal {|D|}^{2}c_{m}\varsigma)$. Here, $O(\mathcal {|D|}^{2}c_{m}\varsigma)$ is the complexity to determine the optimal work loads, $O(|B|)$ is the complexity to determine $f$, and $O(\mathcal{|R|}\mathcal|{Q}_{r}|)$ is the complexity in building the VM set for the migration. $O(|E|^{2})$ is the complexity of determining the path $p$ for Manycast-SPR.

\subsection{Heuristic Algorithms For Heavy Work Loads}\label{sec:algorithm_Manycast-LPR}
When the work load of the network is heavy, the number of available spectrum slots in the network is limited. Since Manycast-SPR only uses the shortest path (one path) for routing, it is impossible for Manycast-SPR to find an available path and spectrum slots in this scenario. Then, Manycast-SPR may block the migration request, and leads to high brown energy consumption of DCs. Hence, we propose another algorithm Manycast-LPR to achieve better routing performance, that results in low brown energy consumption. Manycast-LPR checks $K$-shortest paths from the source node to the destination node, and picks up the idlest path to serve the requests. The requests will be provisioned with a higher probability by Manycast-LPR as compared to Manycast-SPR. In summary, Manycast-LPR is expected to provision the inter-DC VM migration requests under a heavy work load. It targets to find a path with more available spectrum slots at the expense of a higher complexity.

Manycast-LPR, as shown in Alg. \ref{Manycast-LPR}, is to find the least weight routing path that satisfies the VM migration requirement and the network resource constraints. The main difference between Manycast-LPR and Manycast-SPR is using different ways to find a path. For Manycast-SPR, it first determines the source node and the destination node. Manycast-LPR, however, finds the path first, then uses the path to find the source node and the destination node. The other steps are almost the same. Since Manycast-LPR should calculate the weights for all node pairs to find a path, it increases the complexity.

Details of the Manycast-LPR algorithms is described in \emph{Algorithm} \ref{Manycast-LPR}. The complexity for Manycast-LPR is $O(K |B| |E|^{2} \mathcal{|R|} \mathcal|{Q}_{r}| \mathcal{|D|}^{3} c_{m}\varsigma)$. Here, $p(\cdot)$ is a function which targets to get the path for the migration.  $O(\mathcal {|D|}^{2}c_{m}\varsigma)$ is the complexity to determine the optimal work loads, $O(|B|)$ is the complexity to determine $f$, and $O(\mathcal{|R|}\mathcal|{Q}_{r}|)$ is the complexity in building the VM set for the migration. $O(K|E|^{2}\mathcal {|D|})$ is the complexity of determining the path $p$ for Manycast-LPR. The most complex part is to determine the set of VMs for the migration.

\begin{algorithm}
\SetKwData{Left}{left}\SetKwData{This}{this}\SetKwData{Up}{up}
\SetKwFunction{Union}{Union}\SetKwFunction{FindCompress}{FindCompress}
\SetKwInOut{Input}{Input}\SetKwInOut{Output}{Output}
\Input{$\mathcal{G}(V, E, B)$, $\varTheta_{m}$ and $\varPhi_{m}$\;}
\Output{$\mathcal {D}$, $\mathcal {Q}_{r}$, $p(\mathcal {Q}_{r})$, $f(\mathcal {Q}_{r})$ and $\varGamma_{p,f}^{r}$, $r\in R$\;}
\nl Build $\mathcal {D}$, $\mathcal {D}_{s}$ and $\mathcal {D}_{d}$ by the the optimal work loads allocation\;
\nl Collect manycast requests $R$\;
\nl \While{$\mathcal {D}_{s}$ and $\mathcal {D}_{d}$ are not empty}{
\nl     calculate the network congestion ratio for all $p$, and get $w_{B}$\;
\nl       \For{all nodes $s \in \mathcal {D}_{s}$}{
\nl            find $s$ with the max migratory VMs as the source node\;}
\nl       \For {all nodes $d \in \mathcal {D}_{d}$}{
\nl          find $d$ with the max available renewable energy as the destination node\;}
\nl          get the shortest path $p(\mathcal {Q}_{r})$ from $s$ to $d$ for the $r$th migration\;
\nl          build $\mathcal {Q}_{r}$ for the $r$th migration according to $p(\mathcal {Q}_{r})$ and Eqs. \eqref{eq:c9}-\eqref{eq:c10}\;
\nl    \If  {Eq. (\ref{eq:c4}) is satisfied}{
\nl          path $p(\mathcal {Q}_{r})$ is used to migrate\;
\nl          find the start spectrum slot index $f(\mathcal {Q}_{r})$ in $B$ \;
\nl          get the allocated bandwidth $\varGamma_{p,f}^{r}$ \;
\nl       update $\mathcal {D}_{s}$ and $\mathcal {D}_{d}$\;}
\nl   \Else{
\nl          return\; } }
\caption{Manycast with Shortest Path Routing\label{Manycast-SPR}}
\end{algorithm}

\begin{algorithm}
\SetKwData{Left}{left}\SetKwData{This}{this}\SetKwData{Up}{up}
\SetKwFunction{Union}{Union}\SetKwFunction{FindCompress}{FindCompress}
\SetKwInOut{Input}{Input}\SetKwInOut{Output}{Output}
\Input{$\mathcal{G}(V, E, B)$, $\varTheta_{m}$ and $\varPhi_{m}$\;}
\Output{$\mathcal {D}$, $\mathcal {Q}_{r}$, $p(\mathcal {Q}_{r})$, $f(\mathcal {Q}_{r})$ and $\varGamma_{p,f}^{r}$, $r\in R$\;}
\nl Build $\mathcal {D}$, $\mathcal {D}_{s}$ and $\mathcal {D}_{d}$ by the the optimal work loads allocation\;
\nl Collect manycast requests $R$\;
\nl \While{$\mathcal {D}_{s}$ and $\mathcal {D}_{d}$ are not empty}{
\nl     calculate the network congestion ratio for all $p$, and get $w_{B}$\;
\nl       \For{all nodes $s \in \mathcal {D}_{s}$}{
\nl       \For {all nodes $d \in \mathcal {D}_{d}$}{
\nl           build K-shortest path set $\mathcal {P}$\;
\nl       \For  {path $p \in \mathcal {P}$}{
\nl          get path $p(\mathcal {Q}_{r})$ with the lowest congestion ratio for the $r$th migration\;   }}}
\nl          build $\mathcal {Q}_{r}$  for the $r$th migration according to $p(\mathcal {Q}_{r})$ and Eqs. \eqref{eq:c9}-\eqref{eq:c10}\;
\nl    \If  {Eq. (\ref{eq:c4}) is satisfied}{
\nl          path $p(\mathcal {Q}_{r})$ is used to migrate\;
\nl          find the start spectrum slot index $f(\mathcal {Q}_{r})$ in $B$\;
\nl          get the allocated bandwidth $\varGamma_{p,f}^{r}$\;
\nl       update $\mathcal {D}_{s}$ and $\mathcal {D}_{d}$\;}
\nl   \Else{
\nl          return\; } }
\caption{Manycast with Least-weight Path Routing\label{Manycast-LPR}}
\end{algorithm}

\section{Performance Evaluations}\label{sec:evaluations}
We evaluate the proposed algorithms for the RE-AIM problem in this section. In order to make the RE-AIM problem simple, we assume migratory VMs can be completed in one time slot. The NSFNET topology, shown in Fig. \ref{fig:NSF-green}, is used for the simulation. There are 14 nodes, and the DCs are located at $\mathcal {D}= \{3, 5, 8, 10, 12\}$ \cite{Zhang2014, zhang-osa}. The DCs are assumed to be equipped with wind turbines and solar panels, which provide the DCs with renewable energy, as shown in Fig \ref{fig:NSF-green}. The constant $\alpha$ is randomly generated from $[1.6, 3.2]$ and represents the varying price of the electric grid. The capacity of a spectrum slot $b$ is set to 12.5Gps. The maximum number of slots $c_{e}$ is set to 300; 300 spectrum slots are available when the network is empty. Assume $\varsigma$ equals to 10; 10 VMs can be run in one server. $K$ is set to 3, i.e., the maximum number of shortest paths that can be used in Manycast-LPR is 3. Without losing generality, the average energy consumption of a VM is assumed to be 1 unit, implying that $p_{s}$ equals to 10 units. The VM bandwidth requirement $\zeta_{m,k}$ is randomly selected from $[1, 14]$, which is convenient for the simulation. The migration requests are generated by the optimal work loads distribution which is calculated based on $\varTheta_{m}$ and $\varPhi_{m}$. The background traffic is randomly generated between node pairs in the network. The background traffic load is counted as an average of $\frac{\lambda}{\mu}$, where $\lambda$ is an average arrival rate of the requests and $\frac{1}{\mu}$ is the holding period of each request \cite{Zhang2014}. Here, the background traffic arriving process is a poisson process, and the holding time is a negative exponential distribution. Parameters which are used for the evaluation are summarized in Table \ref{tab:simulation-parameters}.

\begin{figure}[t]
    \centering
    \includegraphics[width=0.95\columnwidth]{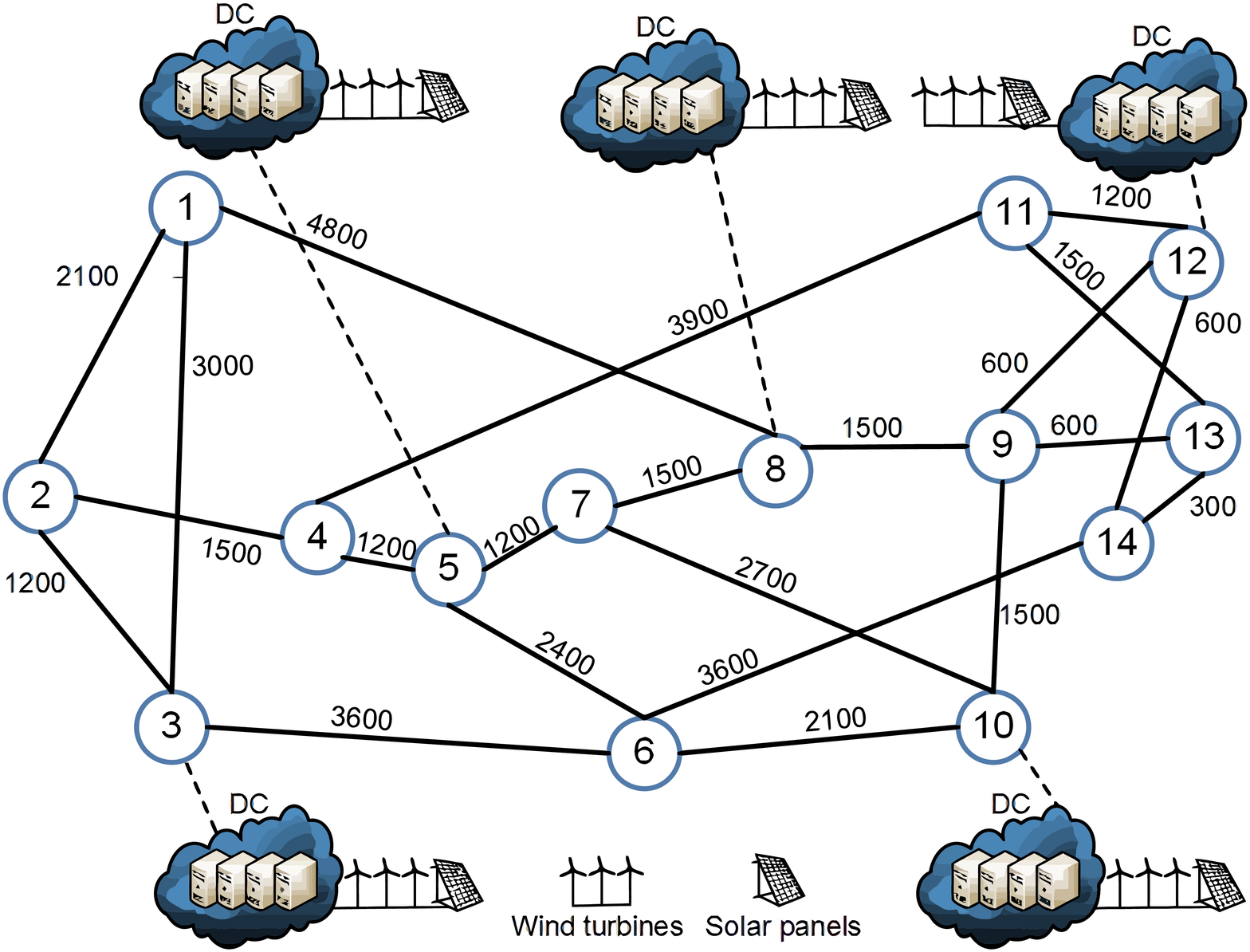}
    \caption{\small NSFNET topology with renewable energy DCs.}
    \label{fig:NSF-green}
\end{figure}

\begin{table}[!htb]
\begin{center}
\caption{\small Simulation Parameters}\label{tab:simulation-parameters}
\begin{tabular}{{|l|p{140pt}|}}
\hline
Network topology                    & NSFNET\\
\hline
$\mathcal {D}$                      & \{3, 5, 8, 10, 12\}             \\
\hline
$\varsigma$                         & $10$ VMs                          \\
\hline
$c_{m}$                             & $1000$ servers                     \\
\hline
$\alpha=\{\alpha_{1},\alpha_{2},...,\alpha_{m}\}$& $\{2.1, 2.5, 1.9, 2.8, 2\}$        \\
\hline
$\varTheta_{m}$                     & $[0, 8000]$\\
\hline
$\varPhi_{m}$                       & $[1000, 9000]$ \\
\hline
$p_{s}$                             & $10$ units, $1$ unit for $1$ VM in average  \\
\hline
$\zeta_{m,k}$                       & $[1, 14]$ Gb/s                            \\
\hline
$c_{e}$                             & $300$ spectrum slots                       \\
\hline
$c_{s}$                             & $12.5$ Gbps                                  \\
\hline
$\kappa$                            & $\{2, 4, 8, 16\} $ spectrum slots              \\
\hline
$\frac{\lambda}{\mu}$               & $\{40, 80, 120, 160, 200, 240, 280, 320\}$      \\
\hline
\end{tabular}
\end{center}
\end{table}

We run the simulation for 150 times, and exclude the scenario with empty VM requests traffic load $(\mathcal{D}_{s}\neq\varnothing \quad\&\quad \mathcal{D}_{d}\neq\varnothing)$. Fig. \ref{fig:compare_energy} shows the total cost of brown energy consumption of the strategy without using renewable energy, Manycast-SPR $(\kappa =2)$ and  Manycast-LPR $(\kappa = 2)$. Apparently, Manycast-SPR and Manycast-LPR can save brown energy substantially. Manycast-SPR saves about $15\%$ cost of brown energy as compared with the strategy without migration. Manycast-LPR reduces up to $31\%$ cost of brown energy as compared with the strategy without migration. Manycast-LPR has better performance because Manycast-LPR employs the least weight path $p$ of all node pairs for routing, while Manycast-LPR engages only the short path $p$ of one node pair.

\begin{figure}[!htb]
    \centering
    \includegraphics[width=1.0\columnwidth]{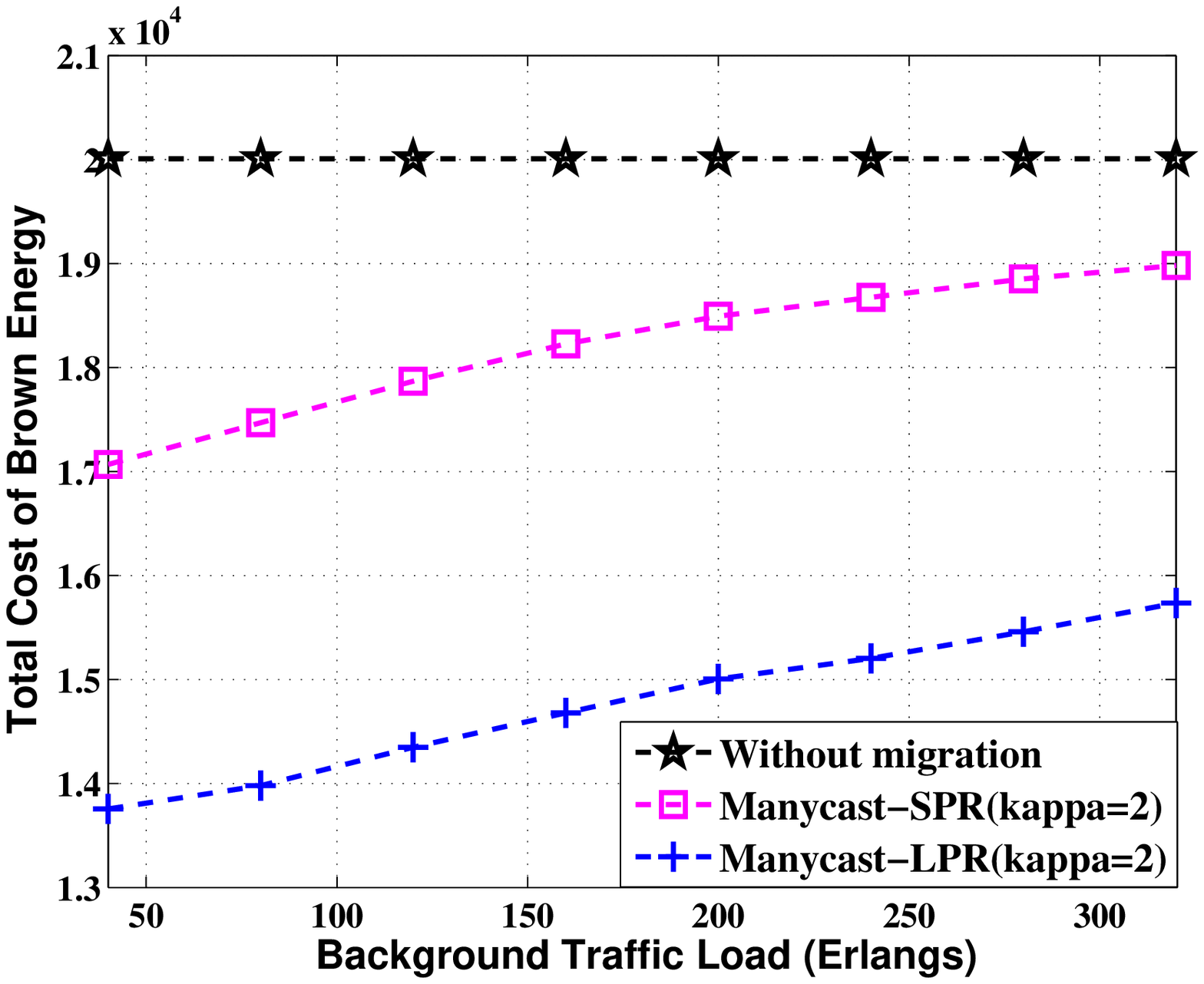}
    \caption{\small Total brown energy cost comparison.}
    \label{fig:compare_energy}
\end{figure}
\begin{figure}[!htb]
    \centering
    \includegraphics[width=1.0\columnwidth]{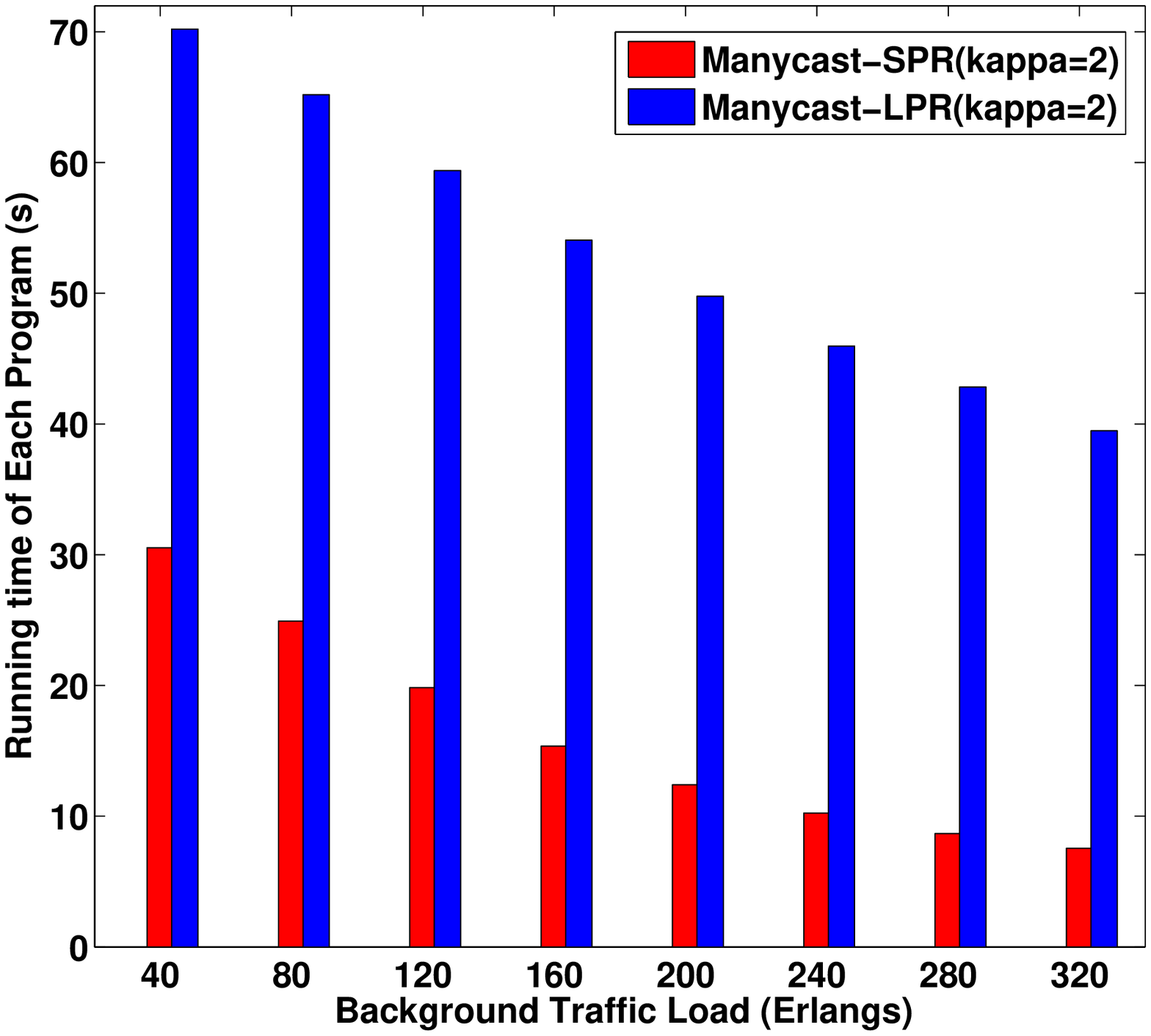}
    \caption{\small Running time comparison.}
    \label{fig:compare_time}
\end{figure}

In order to obtain a better analysis, the running time of Manycast-SPR $(\kappa = 2)$ and Manycast-LPR $(\kappa = 2)$ are shown in Fig. \ref{fig:compare_time}. Manycast-SPR spends less time than Manycast-LPR, implying that Manycast-SPR has a lower complexity and Manycast-LPR has a higher computing complexity. It also illustrates that the time and the final cost value is a trade-off in the evaluation. Manycast-LPR is more complex and hence incurs a lower brown energy cost.

\begin{figure}[!htb]
    \centering
    \includegraphics[width=1.0\columnwidth]{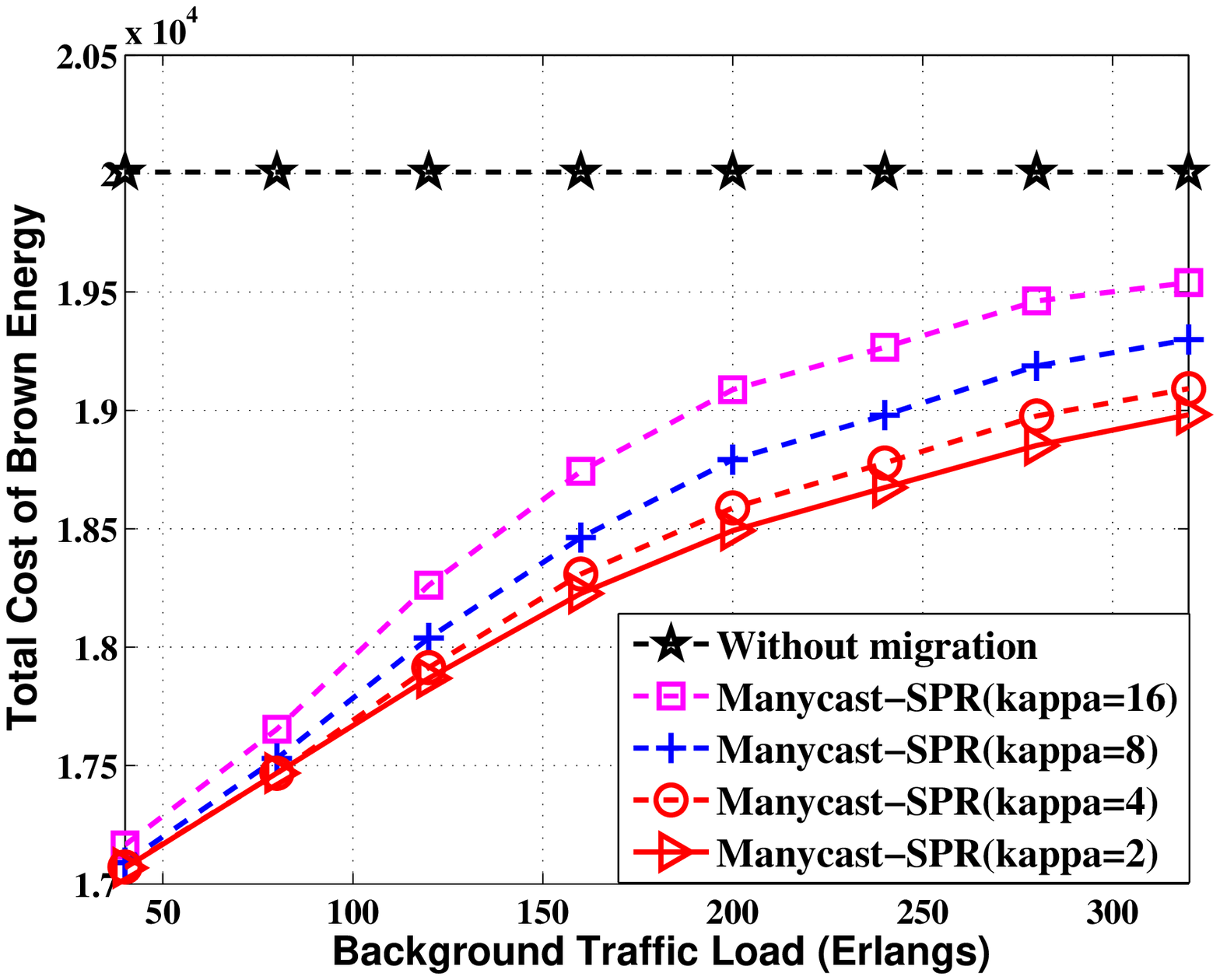}
    \caption{\small Total brown energy cost of Manycast-SPR.}
    \label{fig:cm1000ks1}
\end{figure}
\begin{figure}[!htb]
    \centering
    \includegraphics[width=1.0\columnwidth]{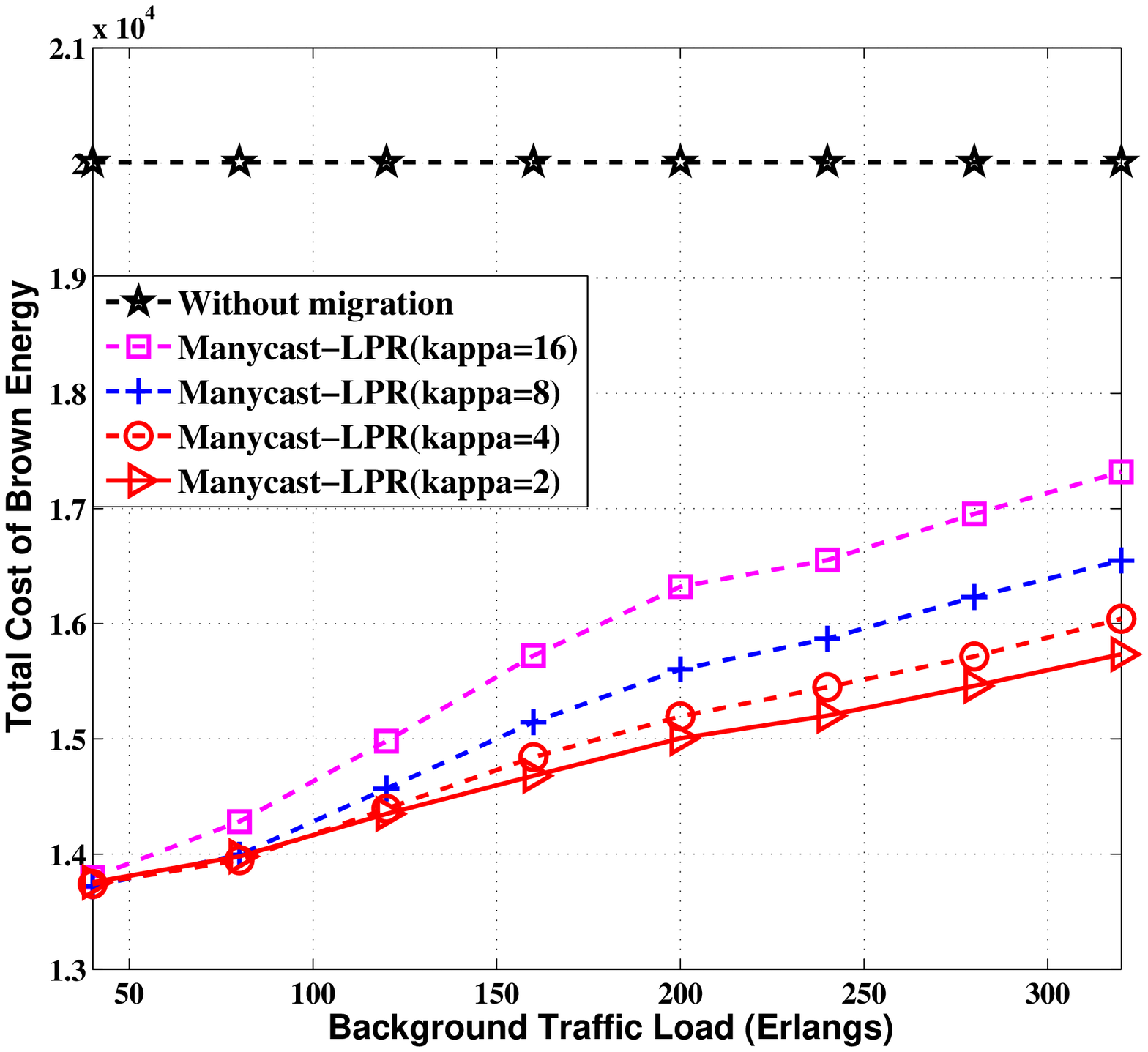}
    \caption{\small Total brown energy cost of Manycast-LPR.}
    \label{fig:cm1000ks2}
\end{figure}

The results of Manycast-SPR for various $\kappa$ are described in Fig. \ref{fig:cm1000ks1}. The cost of brown energy consumption keeps increasing when the background traffic increases, because high background traffic tends to congest the network links and leads to more migration failures. Apparently, a small $\kappa$ brings more benefits than a big $\kappa$ in reducing the energy cost.

Fig. \ref{fig:cm1000ks2} shows the results of Manycast-LPR for various $\kappa$, almost the same results as shown in Fig. \ref{fig:cm1000ks1}, but the cost of the brown energy consumption is much less than that in Fig. \ref{fig:cm1000ks1}, because Manycast-LPR can easily find a path which has available bandwidth for migration. Obviously, Manycast-LPR with $\kappa =2$ achieves the best result with the lowest cost of consumed brown energy. All these results illustrate that a small $\kappa$ leads to a lower cost of the brown energy consumption and a big $\kappa$ induces a higher cost of the brown energy consumption. This is because it is difficult to find a path with enough bandwidth for a big $\kappa$, when the network has background traffic. A smaller $\kappa$ achieves a lower energy cost at the cost of higher complexity.

Figs. \ref{fig:SPR_time} and \ref{fig:LPR_time} show the running time of Manycast-SPR and that of Manycast-LPR with different $\kappa$, respectively. We can observe that the computing time is decreased when the traffic load increases. For the same $\kappa$ with a given background traffic load, Manycast-SPR consumes more time than Manycast-LPR does. For either of the two algorithms under a specific background traffic load, we can see that the running time is nearly halved when $\kappa$ is doubled. Hence, a smaller $\kappa$ brings a better performance but takes longer time, and a larger $\kappa$ has worse performance with a shorter running time.


\begin{figure}[!htb]
    \centering
    \includegraphics[width=1.0\columnwidth]{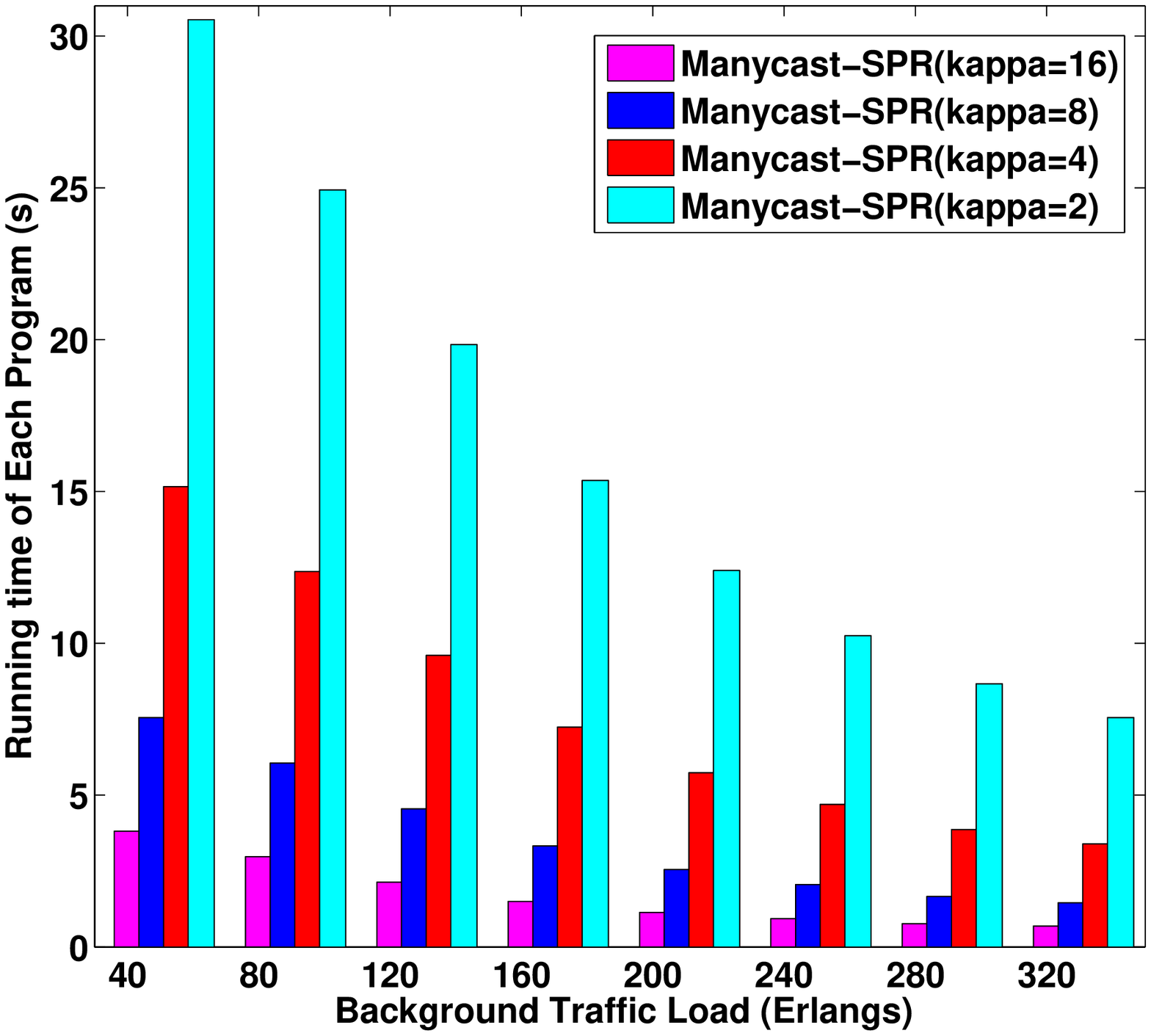}
    \caption{\small Running time of Manycast-SPR.}
    \label{fig:SPR_time}
\end{figure}
\begin{figure}[!htb]
    \centering
    \includegraphics[width=1.0\columnwidth]{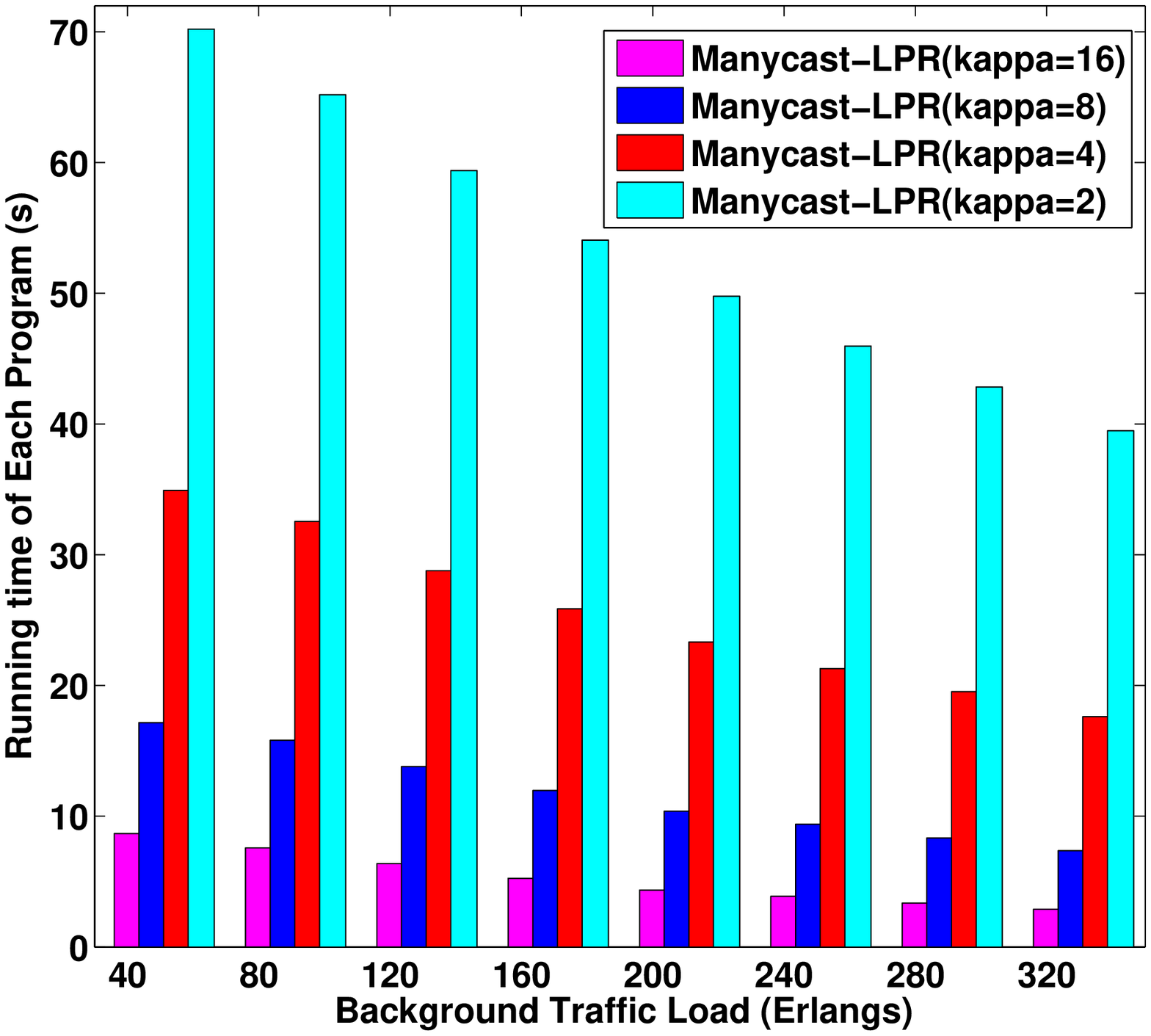}
    \caption{\small Running time of Manycast-LPR.}
    \label{fig:LPR_time}
\end{figure}

\section{Conclusion}\label{conclusion}
Datacenters are widely deployed for the increasing demands of data processing and cloud computing. The energy consumption of DCs will take up $25\%$ of the total ICT energy consumption by 2020. Powering DCs with renewable energy can help save brown energy. However, the availability of renewable energy varies by locations and changes over time, and DCs' work loads demands also vary by locations and time, thus leading to the mismatch between the renewable energy supplies and the work loads demands in DCs. Inter-DC VM migration brings additional traffic to the network, and the VM mitigation is constrained by the network capacity, rendering inter-DC VM migration a great challenge.

This paper addresses the emerging renewable energy-aware inter-DC VM migration problem. The main contribution of this paper is to reduce the network influence on the migration while minimizing the brown energy consumption of the DCs. The RE-AIM problem is formulated and proven to be NP-hard. Two heuristic algorithms, Manycast-SPR and Manycast-LPR, have been proposed to solve the RE-AIM problem. Our results show that Manycast-SPR saves about $15\%$ cost of brown energy as compared with the strategy without migration, while Manycast-LPR saves about $31\%$ cost of brown energy as compared with the strategy without migration. The computing time of Manycast-LPR is longer than that of Manycast-SPR because the complexity of Manycast-LPR is higher than Manycast-SPR. In conclusion, we have demonstrated the viability of the proposed algorithms in minimizing brown energy consumption in inter-DC migration without congesting the network.


\bibliographystyle{IEEEtran}


\begin{thebibliography}{15}

%
%

\bibitem{Sadiku2014}
M. ~Sadiku, S. ~Musa, and O. ~Momoh, {"Cloud computing: Opportunities
and challenges"}, \emph{IEEE Potentials}, vol. 33, no. 1, pp. 34–-36, Jan. 2014.
\bibitem{Zhangyan_13DC}
Y. ~Zhang and N. ~Ansari, {"HERO: Hierarchical energy optimization for
data center networks"}, \emph{IEEE Systems Journal}, vol. 2, no. 9, pp. 406–-415,
Jun. 2015.
\bibitem{zhangyan_dc_survey}
Y. ~Zhang and N. ~Ansari, {"On architecture design, congestion notification, TCP incast and
power consumption in data centers}”, \emph{IEEE Communications Surveys Tutorials}, vol. 15, no. 1, pp. 39–-64, Jan. 2013.
\bibitem{Pickavet2008}
M. ~Pickavet, ~\emph{et al.}, {"Worldwide energy needs for ICT: The rise of poweraware networking"}, in Proc. \emph{ANTS 2008}, pp. 1–-3, Dec. 2008.
\bibitem{Green-cloud2013}
U. ~Mandal, ~\emph{et al.}, {"Greening the cloud using renewable-energy-aware service migration"}, \emph{IEEE Network}, vol. 27, no. 6, pp. 36-–43, Nov. 2013.
\bibitem{tao2014_magazine}
T. ~Han and N. ~Ansari, {"Powering mobile networks with green energy"}, \emph{IEEE Wireless Communications}, vol. 21, no. 1, pp. 90–-96, Feb. 2014.
\bibitem{Goiri_GreenDc}
I. ~Goiri, ~\emph{et al.}, {"Designing and managing data centers powered by renewable energy"}, \emph{IEEE Micro}, vol. 34, no. 3, pp. 8–-16, May 2014.
\bibitem{Wood_cloudNet_2014}
T. ~Wood, ~\emph{et al.}, {"CloudNet: Dynamic pooling of cloud resources by live WAN migration of virtual machines"}, \emph{IEEE/ACM Transactions on Networking}, vol. PP, no. 99, pp. 1–-16, Aug. 2014.
\bibitem{VMware_F5}
Enabling long distance live migration with f5 and VMware vMotion.[Online].
Available: https://f5.com/resources/white-papers/enablinglong-distance-live-migration-with-f5-and-vmware-vmotion.
\bibitem{Shieh2007}
W. ~Shieh, X. ~Yi, and Y. ~Tang, {"Transmission experiment of multi-gigabit coherent optical OFDM systems over 1000km ssmf fibre"}, \emph{Electronics Letters}, vol. 43, no. 3, pp. 183–-184, Feb. 2007.
\bibitem{Armstrong2009}
J. ~Armstrong, {"OFDM for optical communications"}, \emph{J. Lightw. Technol.}, vol. 27, pp. 189–-204, Feb. 2009.
\bibitem{Develder2012}
C. Develder, ~\emph{et al.}, {"Optical networks for grid and cloud computing applications"}, \emph{Proceedings of the IEEE}, vol. 100, pp. 1149–-1167, May 2012.
\bibitem{Figuerola_2009}
S. ~Figuerola, ~\emph{et al.}, {"Converged optical network infrastructures in support of future internet and grid services using IaaS to reduce GHG emissions"}, Journal of Lightwave Technology, vol. 27, no. 12, pp. 1941–-1946, Jun. 2009.
\bibitem{Ghamkhari_2013}
M. ~Ghamkhari and H. ~Mohsenian-Rad, {"Energy and performance management of green data centers: A profit maximization approach"}, \emph{IEEE Transactions on Smart Grid}, vol. 4, no. 2, pp. 1017-–1025, Jun. 2013.
\bibitem{Yonggang_Wen_Globemcom_2012}
S. ~Fang, ~\emph{et al.}, {"Energy optimizations for data center network: Formulation and its solution"}, in Proc. \emph{GLOBECOM}, pp. 3256-–3261, Dec. 2012.
\bibitem{Survey_Green_DC_2012}
D. ~Cavdar and F. ~Alagoz, {"A survey of research on greening data centers"}, in Proc. \emph{GLOBECOM}, pp. 3237–-3242, Dec. 2012.
\bibitem{Fangming_Liu_IEEE_Network2014}
W. ~Deng, ~\emph{et al.}, {"Harnessing renewable energy in cloud datacenters: opportunities and challenges"}, \emph{IEEE Network}, vol. 28, no. 1, pp. 48–-55, Jan. 2014
\bibitem{IP_over_WDM_icc_2012}
M. ~Gattulli, M. ~Tornatore, R. ~Fiandra, and A. ~Pattavina, {"Low-carbon
routing algorithms for cloud computing services in IP-over-WDM networks"}, in Proc. \emph{ICC}, pp. 2999–-3003, Jun. 2012.
\bibitem{abbas2015_green_dc}
A. ~Kiani and N. ~Ansari, {"Toward low-cost workload distribution for integrated green data centers"}, \emph{IEEE Communications Letters}, vol. 19, no. 1, pp. 26–-29, Jan. 2015.
\bibitem{Christodoulopoulos2011}
K. ~Christodoulopoulos, I. ~Tomkos, and E. ~Varvarigos, {"Elastic bandwidth allocation in flexible OFDM-based optical networks"}, \emph{J. Lightw. Technol.}, vol. 29, pp. 1354–-1366, May 2011.
\bibitem{Zhang2014}
L. ~Zhang and Z. ~Zhu, {"Dynamic anycast in inter-datacenter networks over elastic optical infrastructure"}, in Proc. \emph{ICNC}, pp. 491-–495, Feb. 2014.
\bibitem{zhang-osa}
L. ~Zhang and Z. ~Zhu, {"Spectrum-efficient anycast in elastic optical inter-datacenter networks"}, \emph{Elsevier Optical Switching and Networking (OSN)}, vol. 14, no. 3, pp. 250–-259, Aug. 2014.
\end{thebibliography}
\end{document}